\documentclass[10pt,twocolumn]{article}
\usepackage[top=1in, bottom=1in, left=0.85in, right=0.85in]{geometry}
\usepackage{amsmath,amsthm,amssymb}
\usepackage{mathtools}
\usepackage{graphicx}
\usepackage{hyperref}
\usepackage{mathrsfs}
\usepackage{physics}
\usepackage{dsfont}
\usepackage{xcolor}
\usepackage{authblk}
\usepackage{cuted}
\usepackage{balance} 

\usepackage[
  backend=biber,
  style=numeric-comp,
  sorting=none,
  maxbibnames=99,   
  maxcitenames=99   
]{biblatex}
\bibliography{references}

\AtBeginDocument{
  \DeclareSymbolFont{AMSb}{U}{msb}{m}{n}
  \DeclareSymbolFontAlphabet{\mathbb}{AMSb}}

\newcommand{\zz}{\mathbb Z}
\newcommand{\nn}{\mathbb N}
\newcommand{\cc}{\mathbb C}
\newcommand{\cH}{\mathcal H}
\newcommand{\cB}{\mathcal B}

\newcommand{\cO}{\mathcal O}
\newcommand{\cK}{\mathcal K}
\newcommand{\ds}{\mathds{1}}
\newcommand{\id}{\operatorname{id}}
\newcommand{\supp}{\operatorname{supp}}
\newcommand{\lmin}{\lambda^{>0}_{\mathrm{min}}}
\newcommand{\EE}{\mathbb E}
\newcommand{\FF}{\mathbb F}
\newcommand{\WW}{\mathbb W}
\newcommand{\PP}{\mathbb P}

\renewcommand{\epsilon}{\varepsilon}
\renewcommand{\phi}{\varphi}
\renewcommand{\geq}{\geqslant}
\renewcommand{\leq}{\leqslant}

\theoremstyle{plain}
\newtheorem{thm}{Theorem}
\newtheorem{prop}{Proposition}
\newtheorem{lem}{Lemma}

\newtheorem{example}{Example}
\theoremstyle{definition}

\theoremstyle{remark}

\theoremstyle{plain}

\title{Entanglement Sudden Death in Gapped Ground States}
\author[]{Robert Ferydouni\textsuperscript{$*$}}
\author[]{Thomas Jackson\textsuperscript{$\dagger$}}
\author[]{Samuel O.\ Scalet\textsuperscript{$\ddagger$}}
\affil[]{Center for Quantum Mathematics and Physics (QMAP), University of California, Davis, CA 95616, USA}
\affil[]{\textsuperscript{$*$}\texttt{ferydounirobert@gmail.com}\quad
\textsuperscript{$\dagger$}\texttt{thomasjacksonmath@gmail.com}\quad
\textsuperscript{$\ddagger$}\texttt{sscalet@ucdavis.edu}}
\date{}

\begin{document}

\twocolumn[
\begin{@twocolumnfalse}
\maketitle
\begin{abstract}
        Spatial sudden death of entanglement refers to the disappearance of quantum correlations between subsystems after a certain distance called the entanglement length. Specifically, given a density matrix $\rho_{ABC}$ on a tripartite system $ABC$ with $B$ spatially separating $A$ and $C$, the entanglement length is the distance $\ell(A,C)$ after which the reduced density matrix $\rho_{AC}$, given by tracing out the $B$ subsystem, is separable. We show in the case of gapped ground states in any dimension that the entanglement length is finite on finite subsystems. Restricting to one-dimensional systems, we prove that normal matrix product states have a finite entanglement length independent of $A$ and $C$, and that a finite pure state with such uniform entanglement length has a matrix product state representation with bond dimension independent of $A$ and $C$. Consequently, in one dimension, entanglement length distinguishes matrix product states from all other pure states.
\end{abstract}
\vspace{1em}
\end{@twocolumnfalse}
]

\section{Introduction}

Entanglement is the defining nonclassical feature of quantum mechanics and the central resource of quantum information theory.
As a form of correlation between subsystems, it is widely studied in quantum many-body systems, often through measures that capture quantum and classical correlations alike. For ground states of gapped local Hamiltonians, correlations between observables on distant regions decay exponentially~\cite{hastingskoma2006,nachtergaele2006,Kuwahara_2022}, which includes decay of entanglement, but does not distinguish it from the decay of classical correlations. Quantitative entanglement measures for mixed states that can make this distinction vary in motivation, structure, and verifiability~\cite{horodecki2009}. Separability of the state, in contrast, is straightforward: a state is unentangled when it is a convex combination of product states.

In locally interacting quantum systems, correlations between distant observables are normally studied in a way that treats classical and quantum correlations equally. Implicit in this viewpoint is that these correlations are usually nonzero at all distances, as classical correlations typically decay at most exponentially. More recently, a contrasting behavior in quantum entanglement has been demonstrated for noisy channels~\cite{yueberly2004,yueberly2009} and thermal states of many-body systems~\cite{bakshi2024,scalet2026,bakshi2026entanglementquantumspinchains,pilatowskycameo2026onedimensionalquantumgibbsstates}.
In these systems, it has been proven that as a parameter like time, temperature, or distance is varied, the correlations in the system decay asymptotically to zero, whereas entanglement disappears completely at some finite value of the parameter. This phenomenon is called entanglement sudden death.\par
It has long been known that sufficiently small perturbations of the separable, maximally mixed state are separable \cite{zyczkowski1998,gurvitsbarnum2002}. 
Prior work relying on external or thermal noise thus derived sudden death of entanglement by viewing states as small perturbations of the maximally mixed state.\par
In this work, we consider ground states with exponentially decaying (classical and quantum) correlations, such as those of gapped Hamiltonians, and ask whether entanglement between subregions can suddenly die with distance, as previously shown in the one-dimensional thermal setting~\cite{scalet2026}.
These states can no longer be viewed as perturbations of the maximally mixed state, as their marginals are generally not of full rank. While a finite entanglement length has been found analytically \cite{Osterloh_2002, osborne2002}, no general results for local quantum spin systems were known.
Also for specific models, a sudden death of the entanglement negativity, which is weaker than separability, has been observed \cite{Javanmard2018,Parez_2024}.

Our first main result overcomes this difficulty by showing that, for translation-invariant gapped ground states of arbitrary dimension, decay of correlations alone implies spatial sudden death of entanglement. Specifically, for any two subregions $A$ and $C$ of finite size, there exists a finite entanglement length $\ell(A,C)$ such that $\rho_{AC}$ is separable whenever their distance exceeds $\ell(A,C)$. In general the entanglement length may depend on the sizes and shapes of the two regions, a dependence that is unavoidable as the second main result illustrates: we establish a connection between a finite entanglement length and an exact matrix product state representation. For normal matrix product states (those with finite injectivity length), we show there is a finite entanglement length depending only on their transfer operator. We also show that any pure state for which the entanglement length is uniformly upper bounded for all subregions $A, C$ is a matrix product state. Thus, our results show that, in any dimension, entanglement length of gapped ground states on finite subsystems is finite, and in one dimension it distinguishes matrix product states from all other pure states.

\section{Sudden death for gapped ground states}\label{sec:ground_states}
We consider on-site Hilbert spaces $\cH\cong\cc^d$ on the set $\Lambda\subseteq\zz^\nu$. For a finite region $X$ we write $\cH_X=(\cc^d)^{\otimes \abs{X}}$ and $d_X=\dim\cH_X=d^{|X|}$. In this section we consider finite regions $A, C \subset \zz^\nu$ and denote their distance $d(A,C)=\min_{a\in A,c\in C}d(a,c)$, where $d$ is the lattice metric. 
We say two sets $A$, $A'$ are of the same shape if $A'=A+x$ for some $x\in\zz^\nu$. We write $\|\cdot\|$ for the operator norm, $\|\cdot\|_1$ for the trace norm, $\ds$ for the identity matrix and $\id$ for the identity map on operators. 

For a positive semidefinite matrix $\rho$, $\lmin(\rho)$ denotes the smallest nonzero eigenvalue and we let $\mathrm{supp}(\rho) = (\ker \rho)^\perp$. A positive semidefinite operator $X$ on $\cH_A\otimes\cH_C$ is separable if $X=\sum_iX_i^A\otimes X_i^C$ with positive semidefinite $X_i^A$ and $X_i^C$ supported on $A$ and $C$ respectively, and entangled otherwise.

We consider ground states of bounded translation-invariant finite-range Hamiltonians $H_\Lambda=\sum_{X\subset\Lambda} h_X$, which have exponentially decaying correlations \cite{nachtergaelesims2009}; this finite-range assumption enters our arguments only through Lemma~\ref{lem:corr_decay}. We say that a family of Hamiltonians $\{H_\Lambda\}$ has a nondegenerate gapped ground state if, along a sequence of volumes $\Lambda$ exhausting $\zz^\nu$, the ground state $\ket{\Omega_\Lambda}$ of $H_\Lambda$ is unique and separated from the rest of the spectrum by a gap $\gamma>0$ independent of $\Lambda$. All our statements concern the reduced density matrices $\rho^\Lambda_X=\Tr_{\Lambda\setminus X}\ket{\Omega_\Lambda}\bra{\Omega_\Lambda}$ of finitely many spins.
Passing to a subsequence if necessary, we assume the local marginals converge as below, and fix a thermodynamic-limit state
\[
\rho_X=\lim_{\Lambda\to\zz^\nu}\rho^\Lambda_X.
\]
We assume these marginals to be translation-invariant as well.

The following proposition shows that separability arguments about $\rho_{AC}$ can be made in the support subspaces of the marginals $\rho_A$ and $\rho_C$. The main theorem that follows uses the exponential decay of correlations to conclude the bound in the hypothesis of the proposition.

\begin{prop}\label{prop:reduction}
Let $\rho_{AC}$ be a state on $\cH_A\otimes\cH_C$. If
\[
\|\rho_{AC}-\rho_A\otimes\rho_C\|\leq\frac{\lmin(\rho_A)\lmin(\rho_C)}{\sqrt{\operatorname{rank}(\rho_A)\operatorname{rank}(\rho_C)}}\,,
\]
then $\rho_{AC}$ is separable.
\end{prop}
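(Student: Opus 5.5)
First I restrict everything to the supports. Let $P_A$ and $P_C$ be the projections onto $\supp(\rho_A)$ and $\supp(\rho_C)$, with $r_A=\operatorname{rank}(\rho_A)$ and $r_C=\operatorname{rank}(\rho_C)$. Since $\Tr[(\ds-P_A)\otimes\ds\,\rho_{AC}]=\Tr[(\ds-P_A)\rho_A]=0$ and $\rho_{AC}\geq0$, we have $\rho_{AC}=(P_A\otimes\ds)\rho_{AC}(P_A\otimes\ds)$, and the same holds for $P_C$. Hence $\rho_{AC}$ lives on $\supp(\rho_A)\otimes\supp(\rho_C)\cong\cc^{r_A}\otimes\cc^{r_C}$, and separability there is the same as separability in the full space. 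So from now on I work in this reduced space, where $\rho_A$ and $\rho_C$ are invertible.

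Next I normalize by the marginals. Set
\[
\sigma=(\rho_A^{-1/2}\otimes\rho_C^{-1/2})\,\rho_{AC}\,(\rho_A^{-1/2}\otimes\rho_C^{-1/2}),
\]
with inverses taken on the supports. Conjugation by a product of invertible local operators maps separable operators to separable operators and back, so $\rho_{AC}$ is separable iff $\sigma$ is. We can write $\sigma=\ds+\Delta$, where
\[
\Delta=(\rho_A^{-1/2}\otimes\rho_C^{-1/2})(\rho_{AC}-\rho_A\otimes\rho_C)(\rho_A^{-1/2}\otimes\rho_C^{-1/2}).
\]
This gives
\[
\|\Delta\|\leq\|\rho_A^{-1}\|\,\|\rho_C^{-1}\|\,\|\rho_{AC}-\rho_A\otimes\rho_C\|=\frac{\|\rho_{AC}-\rho_A\otimes\rho_C\|}{\lmin(\rho_A)\lmin(\rho_C)}\leq\frac{1}{\sqrt{r_Ar_C}}
\]
by the hypothesis.

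The remaining step is a separable-ball result around the identity: on $\cc^{m}\otimes\cc^{n}$, every Hermitian $\ds+\Delta$ with $\|\Delta\|_2\leq1$ is separable. This is the Gurvits--Barnum theorem~\cite{gurvitsbarnum2002}, stated in Hilbert--Schmidt norm for the unnormalized identity. Applying it takes one more line: $\|\Delta\|_2\leq\sqrt{r_Ar_C}\,\|\Delta\|\leq1$, which is exactly where the factor $\sqrt{\operatorname{rank}(\rho_A)\operatorname{rank}(\rho_C)}$ in the hypothesis comes from. I would then undo the conjugation to conclude that $\rho_{AC}$ is separable.

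The main obstacle is getting the Gurvits--Barnum statement in the right normalization. Their result says the ball of Frobenius radius $1$ around $\ds$ lies inside the separable cone, equivalently radius $1/\sqrt{mn}$ around the normalized maximally mixed state after rescaling. I would verify this carefully so that the constant is exactly $1$ and not a dimension-dependent weaker one; if only the weaker version from~\cite{zyczkowski1998} were available, the stated bound would pick up extra factors. The reduction to supports and the bound on $\|\Delta\|$ are routine.
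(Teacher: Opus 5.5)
Your proof is correct, but it reaches the Gurvits--Barnum ball through a different decomposition than the paper's. The paper never inverts the marginals. It writes $\rho_A=(\rho_A-\lmin(\rho_A)P_A)+\lmin(\rho_A)P_A$, and likewise for $\rho_C$, and expands $\rho_A\otimes\rho_C$ into four terms. Three of these are tensor products of positive semidefinite operators. Gurvits--Barnum is applied only to the remaining term $\lmin(\rho_A)\lmin(\rho_C)\bigl(P_A\otimes P_C+\Delta'\bigr)$ on $\supp\rho_A\otimes\supp\rho_C$, where $\Delta'=(\rho_{AC}-\rho_A\otimes\rho_C)/(\lmin(\rho_A)\lmin(\rho_C))$. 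Your support reduction is exactly Lemma~\ref{lem:support}.

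You instead apply the local filter $\rho_A^{-1/2}\otimes\rho_C^{-1/2}$, which turns $\rho_A\otimes\rho_C$ into the identity on the supports, so the whole state becomes $\ds+\Delta$ at once. Both routes end at the same operator-norm hypothesis, but yours passes through a sharper intermediate criterion. You get separability whenever $\|(\rho_A^{-1/2}\otimes\rho_C^{-1/2})(\rho_{AC}-\rho_A\otimes\rho_C)(\rho_A^{-1/2}\otimes\rho_C^{-1/2})\|_2\le1$, which weights the deviation by the full spectra of the marginals. The paper's splitting, at its sharpest, needs $\|\rho_{AC}-\rho_A\otimes\rho_C\|_2\le\lmin(\rho_A)\lmin(\rho_C)$, and that condition implies yours. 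In exchange, the paper's additive splitting exhibits the separable decomposition explicitly without any inversion, and it is the same device reused for the Choi matrix in Proposition~\ref{prop:EB}.

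One side remark of yours is off, though you never use it. Rescaling gives a Frobenius radius of $1/(mn)$ around $\ds/(mn)$, or $1/\sqrt{mn(mn-1)}$ within the trace-one hyperplane, not $1/\sqrt{mn}$. The unnormalized radius-one form you actually invoke is the one the paper uses, so no dimension factor is lost.
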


\begin{proof}
Let $P_A$ and $P_C$ be the projections onto $\supp\rho_A$ and $\supp\rho_C$ respectively. Since $\lmin(\rho_A)$ is the smallest nonzero eigenvalue of $\rho_A$, we have $\rho_A-\lmin(\rho_A)P_A\geq0$, and likewise for $C$. Let $\Delta\coloneq \rho_{AC} - \rho_A \otimes \rho_C$. Writing $\rho_A=(\rho_A-\lmin(\rho_A)P_A)+\lmin(\rho_A)P_A$ and similarly for $\rho_C$, we get
\begin{align*}
\rho_{AC}={}&\lmin(\rho_A)\lmin(\rho_C)\\
&\qquad\times\Big(P_A\otimes P_C+\frac{\Delta}{\lmin(\rho_A)\lmin(\rho_C)}\Big)\\
&+(\rho_A-\lmin(\rho_A)P_A)\otimes(\rho_C-\lmin(\rho_C)P_C)\\
&+(\rho_A-\lmin(\rho_A)P_A)\otimes\lmin(\rho_C)P_C\\
&+\lmin(\rho_A)P_A\otimes(\rho_C-\lmin(\rho_C)P_C).
\end{align*}
The last three terms are tensor products of positive semidefinite operators and hence separable. By Lemma~\ref{lem:support} both $\rho_{AC}$ and $\rho_A\otimes\rho_C$, and therefore $\Delta$, are supported in $\supp\rho_A\otimes\supp\rho_C$, on which $P_A\otimes P_C$ acts as the identity. We apply Theorem 1 of \cite{gurvitsbarnum2002} on this support, obtaining separability of $\ds_{\mathrm{supp}(\rho_A) \otimes \mathrm{supp}(\rho_C)}+\Delta'$ when $\|\Delta'\|\le 1/\sqrt{\mathrm{rank}(\rho_A)\mathrm{rank}(\rho_C)}$, where $\lmin(\rho_A)\lmin(\rho_C)\Delta' = \Delta$. This implies $\rho_{AC}$ is separable, as it is a positive linear combination of separable states.
\end{proof}

\begin{thm}\label{thm:ground_state}
Let $\{H_\Lambda\}$ be a family of Hamiltonians with nondegenerate gapped ground states and translation-invariant marginals in the limit. For any finite regions $A,C \subset \zz^\nu$ there is a finite entanglement length $\ell(A,C)$ depending only on the shapes of $A$, $C$ and the Hamiltonian, such that $\rho_{AC}$ is separable if $d(A,C)\ge\ell(A,C)$.
\end{thm}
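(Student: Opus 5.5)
The strategy is to verify the hypothesis of Proposition~\ref{prop:reduction} once $d(A,C)$ is large. The right-hand side of that hypothesis depends only on $\rho_A$ and $\rho_C$. Its left-hand side should be controlled by exponential decay of correlations, in the form of Lemma~\ref{lem:corr_decay}.

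\textbf{Step 1: the right-hand side depends only on shapes.} By translation invariance of the limiting marginals, $\rho_{A+x}$ is unitarily equivalent to $\rho_A$ under the shift. Hence $\lmin(\rho_A)$ and $\operatorname{rank}(\rho_A)$ depend only on the shape of $A$, and likewise for $C$. Set
\[
\epsilon(A,C)\coloneq\frac{\lmin(\rho_A)\lmin(\rho_C)}{\sqrt{\operatorname{rank}(\rho_A)\operatorname{rank}(\rho_C)}}.
\]
This quantity is strictly positive, since $\lmin$ is by definition a nonzero eigenvalue and $\rho_A\neq0$. Note that no uniform lower bound on $\lmin$ is needed, because we only claim a shape-dependent length.

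\textbf{Step 2: bound the left-hand side by decay of correlations.} Put $\Delta=\rho_{AC}-\rho_A\otimes\rho_C$. For any $O$ on $\cH_A\otimes\cH_C$ we have $\Tr(\Delta O)=\langle O\rangle-\langle O\rangle_{\rho_A\otimes\rho_C}$. Expand $O$ in a product operator basis,
\[
O=\sum_{i,j}c_{ij}\,E_i^A\otimes E_j^C,
\]
using for example normalized matrix units, with $d_A^2d_C^2$ terms in total. Each term gives
\[
\Tr\bigl(\Delta\,(E_i^A\otimes E_j^C)\bigr)=\langle E_i^AE_j^C\rangle-\langle E_i^A\rangle\langle E_j^C\rangle.
\]
This is a connected correlation. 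Lemma~\ref{lem:corr_decay} applies to the finite-volume states uniformly in $\Lambda$ and passes to the limit state. It bounds each such term by $K\|E_i\|\|E_j\|\,g(A,C)\,e^{-d(A,C)/\xi}$, where $g$ is at most polynomial or boundary-size dependent on $A$ and $C$ (for instance $\min(|A|,|C|)$, as in Hastings--Koma / Nachtergaele--Sims). Summing gives $\|\Delta\|\le\|\Delta\|_1\le K' d_A^2 d_C^2\, g(A,C)\, e^{-d(A,C)/\xi}$. The prefactor depends only on $|A|$, $|C|$ and the Hamiltonian. Alternatively, one can take the supremum over $\|O\|\le1$ and use a version of Lemma~\ref{lem:corr_decay} for general (non-product) observables. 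The product-basis expansion avoids needing that version, at the cost of the dimensional factor.

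\textbf{Step 3: choose the length.} Define $\ell(A,C)$ as the smallest integer $\ell$ with $K' d_A^2d_C^2 g(A,C)e^{-\ell/\xi}\le\epsilon(A,C)$, that is,
\[
\ell(A,C)\approx\xi\log\bigl(K'd_A^2d_C^2g(A,C)/\epsilon(A,C)\bigr).
\]
One subtlety is that $g$ may depend on the relative position of $A$ and $C$ and not only on their shapes. I would handle this by taking a bound on $g$ that depends only on the sizes $|A|$ and $|C|$. Then Proposition~\ref{prop:reduction} yields separability whenever $d(A,C)\ge\ell(A,C)$.

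The main obstacle is Step 2: making sure the decay bound holds uniformly in the finite volumes, so that it survives the thermodynamic limit along the chosen subsequence. It also has to hold for the joint correlation between the two regions rather than merely for single sites. I would first try to apply Lemma~\ref{lem:corr_decay} directly to product observables $E^A\otimes E^C$ as above, since its form is exactly the connected correlator.
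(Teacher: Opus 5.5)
Your proposal is correct and follows the paper's proof. Translation invariance makes the right-hand side of Proposition~\ref{prop:reduction} depend only on shapes. Exponential clustering then bounds $\|\rho_{AC}-\rho_A\otimes\rho_C\|\le\|\rho_{AC}-\rho_A\otimes\rho_C\|_1$, and $\ell(A,C)$ is chosen logarithmically, as you do. Your matrix-unit expansion in Step 2 is exactly the appendix proof of Lemma~\ref{lem:corr_decay}; the paper gets a $(d_Ad_C)^{3/2}$ prefactor instead of your $d_A^2d_C^2$ by using Cauchy--Schwarz.

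Two small points. Lemma~\ref{lem:corr_decay} as stated already gives the trace-norm bound directly, so you do not need to redo the expansion. It also assumes $d(A,C)\ge\mu^{-1}$, which the paper enforces with a $\max\{1,\cdot\}$ in its definition of $\ell(A,C)$; your Step 3 should include the same condition.
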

\begin{proof}
By translation invariance, $\lmin(\rho_A)$, $\lmin(\rho_C)$, $\operatorname{rank}(\rho_A)$ and $\operatorname{rank}(\rho_C)$ are positive constants independent of the positions of $A$ and $C$ in the lattice. Under the gap condition the state has exponential decay of correlations as in Lemma~\ref{lem:corr_decay}:
\[
\|\rho_{AC}-\rho_A\otimes\rho_C\|_1\le K\exp(-\mu d(A,C))\,,
\]
where $\mu>0$ depends only on the gap and on the interaction range and strength, while $K$ additionally depends on $d_A$, $d_C$, $\abs{\partial A}$ and $\abs{\partial C}$. Let
\[
\ell(A,C)=\frac1\mu\max\Big\{1,\,\log\frac{K\sqrt{\operatorname{rank}(\rho_A)\operatorname{rank}(\rho_C)}}{\lmin(\rho_A)\lmin(\rho_C)}\Big\}\,.
\]
Then for $d(A,C) \geq \ell(A,C)$ we have
\begin{align*}
\|\rho_{AC} - \rho_A \otimes \rho_C\| &\leq \|\rho_{AC} - \rho_A \otimes \rho_C\|_1 \\
&\leq \frac{\lmin(\rho_A)\lmin(\rho_C)}{\sqrt{\operatorname{rank}(\rho_A)\operatorname{rank}(\rho_C)}}\,.
\end{align*}
Separability of $\rho_{AC}$ follows from Proposition~\ref{prop:reduction}.
\end{proof}

In the next section we show that for translation-invariant normal matrix product states the entanglement length is independent of $A$ and $C$, i.e., uniform in their size. The converse result, Theorem~\ref{thm:converse}, shows that this dependence cannot be dropped for generic gapped ground states. 

\section{Sudden death for matrix product states}\label{sec:MPS}

We now consider matrix product states in one dimension. Let $V=\cc^D$ be the virtual space of bond dimension $D$. On a finite chain $[1,N]$, we say that a vector $\ket{\Psi}\in(\cc^d)^{\otimes N}$ has a matrix product state representation with bond dimension $D$ if
\begin{equation}\label{eq:mps_rep}
\ket\Psi=\!\!\sum_{i_1,\dots,i_N}\!\!\bra{l}M^{[1]}_{i_1}\cdots M^{[N]}_{i_N}\ket{r}\,\ket{i_1\cdots i_N}
\end{equation} 
for matrices $M^{[k]}_{i}$ of size at most $D\times D$ and boundary vectors $\bra l,\ket r$. 

In the translation-invariant case on the infinite chain we have matrix product state tensors $M_1,\dots,M_d\in\cB(V)$, and we assume the left-canonical gauge $\sum_{i=1}^d M_i^*M_i=\ds_V$. For the normal matrix product states we consider, this can always be achieved by an invertible gauge transformation $M_i\mapsto X M_i X^{-1}$, which leaves the matrix product state invariant; see \cite[Sec.~IV.1]{cirac2021tnreview}. Then the transfer operator
\[
\EE:\cB(V)\to\cB(V),\qquad \EE(Y)=\sum_{i=1}^d M_iYM_i^*,
\]
is a completely positive and trace preserving map (equivalently, a quantum channel). We assume the associated matrix product state is normal, meaning there exists $n$ such that the map $\cB(V) \to \cH_{[1,n]}$ given by
\[X \mapsto \sum_{i_1,\dots ,i_n} \Tr\left(XM_{i_1}\cdots M_{i_n}\right)\ket{i_1 \cdots i_n}\]
is injective. This is equivalent to
\[\cB(V) = \mathrm{span}\{M_{i_1}\cdots M_{i_n} : i_j = 1,\dots ,d\}\,.\]
Proposition~3 of \cite{sanz2010wielandt} shows this is equivalent to primitivity of $\EE$: $1$ is the unique eigenvalue of $\EE$ of modulus $1$ and the corresponding eigenvector $R$ is positive definite and trace-normalized. \par Let $\PP$ be the map $Y \mapsto R\Tr(Y)$, so that $\PP^2 = \PP$. Since $\EE$ is trace preserving and fixes $R$, $\EE \PP = \PP \EE = \PP$. Letting $\WW = \EE - \PP$ we have $\PP\WW=\WW\PP=0$ so that
\begin{equation}\label{eq:spectral}
\EE^n=\PP+\WW^n
\end{equation}
for all $n \in \nn$. 

For an interval $X=[x,x+n-1]\subset\zz$ and a multi-index $\mathbf i=(i_1,\dots,i_n) \in \{1,\dots ,d\}^n$ set $M_{\mathbf i}=M_{i_1}\cdots M_{i_n}$ and define
\[
\FF_X:\cB(V)\to\cB(V)\otimes\cB(\cH_X),
\]
\[
\FF_X(Y)=\sum_{\mathbf i,\mathbf j}M_{\mathbf i}YM_{\mathbf j}^*\otimes\ket{\mathbf i}\bra{\mathbf j},
\]
which is completely positive and trace preserving. The translation-invariant matrix product state generated by $(M_i)$ is the state with reduced density matrices $\rho_X=\Tr_V[\FF_X(R)]$ on intervals $X$. 

In our theorems, we take consecutive intervals $A$, $B$, $C$. Since $M_{\mathbf a\mathbf b\mathbf c} = M_{\mathbf a}M_{\mathbf b}M_{\mathbf c}$ and $\Tr_{\cH_B}\FF_B=\EE^{|B|}$, we obtain
\begin{strip}
\begin{align}
\rho_{AC}&=\Tr_{\cH_B}\Tr_V\big[\FF_{ABC}(R)\big]
=\Tr_V\Tr_{\cH_B}\Big[\big((\FF_A\otimes\id_{BC})\circ(\FF_B\otimes\id_C)\circ\FF_C\big)(R)\Big]\nonumber\\
&=\Tr_V\Big[\big((\FF_A\otimes\id_C)\circ\big(\Tr_{\cH_B}\circ(\FF_B\otimes\id_C)\big)\circ\FF_C\big)(R)\Big]
=\Tr_V\Big[\big((\FF_A\otimes\id_C)\circ(\EE^{|B|}\otimes\id_C)\circ\FF_C\big)(R)\Big].\label{eq:rhoAC_MPS}
\end{align}
\end{strip}

We call a quantum channel $\EE:\cB(\cH)\to\cB(\cH')$ entanglement breaking if $(\id_{\cK}\otimes\EE)(\rho)$ is separable for every auxiliary system $\cK$ and every state $\rho$ on $\cK\otimes\cH$. If $\EE^{|B|}$ is entanglement breaking, the state $$((\EE^{|B|}\otimes\id_C)\circ\FF_C)(R)$$ is separable, and thus $\rho_{AC}$, which is its image of it under a quantum channel, is also separable.

\begin{prop}\label{prop:EB}
If $\EE$ is a primitive channel, there exists $L \geq 1$ such that $\EE^n$ is entanglement breaking for all $n\geq L$.
\end{prop}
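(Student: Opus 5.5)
Since $\EE^n=\PP+\WW^n$ by \eqref{eq:spectral} and $\PP$ is a replacement channel (hence entanglement breaking), the plan is to show that $\EE^n$ is a small perturbation of an entanglement-breaking channel "in the interior". We then invoke the Gurvits--Barnum ball around the maximally mixed state, exactly as in Proposition~\ref{prop:reduction}. Concretely, we work with the Choi matrix
\[
J_n=(\id\otimes\EE^n)(\ket{\Omega}\bra{\Omega}),\qquad \ket\Omega=\textstyle\sum_{k=1}^D\ket{kk},
\]
using the standard fact that a channel is entanglement breaking iff its Choi matrix is separable. Then $J_n=\ds_V\otimes R+(\id\otimes\WW^n)(\ket\Omega\bra\Omega)$. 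Primitivity gives $R>0$, so $\ds\otimes R\geq \lambda_{\min}(R)\,\ds_{V\otimes V}$ has full rank. This is the key advantage over the ground-state setting: no support reduction is needed.

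The steps are as follows. First, since $1$ is the unique peripheral eigenvalue of $\EE$, the spectral radius of $\WW$ satisfies $r(\WW)<1$ (the spectrum of $\WW$ is that of $\EE$ with the eigenvalue $1$ replaced by $0$, because $\PP$ is the spectral projection). Hence $\|\WW^n\|\le c\,\eta^n$ for some $\eta<1$ in any fixed norm on $\cB(\cB(V))$. Second, we bound
\[
\Delta_n:=(\id\otimes\WW^n)(\ket\Omega\bra\Omega)
\]
in operator norm by $c'\eta^n$, using equivalence of norms in finite dimension, with constants depending on $D$. Third, we decompose
\[
J_n=\lambda_{\min}(R)\,\big(\ds+\Delta_n/\lambda_{\min}(R)\big)+(\ds\otimes(R-\lambda_{\min}(R)\ds)).
\]
The second term is a product of positive operators and hence separable. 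By Theorem~1 of \cite{gurvitsbarnum2002}, the first term is separable once $\|\Delta_n\|\le\lambda_{\min}(R)/D$, since the dimension is $D^2$. We choose $L$ so that $c'\eta^L\le \lambda_{\min}(R)/D$ (and $L\ge1$); then $J_n$ is separable for all $n\ge L$.

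The only mildly delicate step is the first, namely justifying exponential decay of $\WW^n$ from primitivity. I would argue it via the Jordan decomposition, or via Gelfand's formula $\lim\|\WW^n\|^{1/n}=r(\WW)<1$, after checking that $\PP$ is the spectral (Riesz) projection of $\EE$ for the eigenvalue $1$. This requires that eigenvalue to be semisimple, which holds because $\EE$ is trace preserving and hence power bounded. An alternative that avoids Choi matrices would be to apply the same decomposition directly to $(\id_\cK\otimes\EE^n)(\rho)$. That route, however, requires uniformity over $\cK$, so the Choi characterization is the cleaner route.
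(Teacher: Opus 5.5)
Your proposal is correct and takes essentially the same route as the paper. You use the Choi matrix $J_n=\ds_V\otimes R+\Delta_n$, get $\|\WW^n\|\to0$ from $r(\WW)<1$ and Gelfand's formula, split off $\ds_V\otimes(R-\lmin(R)\ds_V)$, and apply Gurvits--Barnum under $\|\Delta_n\|\le\lmin(R)/D$; the only cosmetic difference is that the paper identifies $\sigma(\WW)$ via the invariant decomposition $\cB(V)=\operatorname{Im}\PP\oplus\ker\Tr$ rather than a Riesz-projection and semisimplicity argument.
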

\begin{proof}
Let $\ket{\Omega_V}=\sum_{i=1}^D\ket i\otimes\ket i\in V\otimes V$ for an orthonormal basis $\{\ket{i}\}_{i=1}^D$ of the virtual space.
To show that $\EE^{n}$ is entanglement breaking, by \cite[Theorem 4]{horodecki2003}, it suffices to show that the Choi matrix $J_n=(\id_V\otimes\EE^n)(\ket{\Omega_V}\bra{\Omega_V})$ is separable.
By~\eqref{eq:spectral}, and since $(\id_V\otimes\,\PP)(\ket{\Omega_V}\bra{\Omega_V})=\ds_V\otimes R$,
\[
J_n=\ds_V\otimes R+\Delta_n,
\]
\[
\Delta_n=(\id_V\otimes \WW^n)(\ket{\Omega_V}\bra{\Omega_V})\,.
\]
$J_n$ and $R$ are self-adjoint, implying $\Delta_n=J_n-\ds_V\otimes R$ is as well. 

Since any $Y \in \cB(V)$ can be written $Y = \Tr(Y)R + (Y - \Tr(Y)R)$, we have $\cB(V) = \operatorname{Im} \PP \,\oplus\,\ker\Tr$, where the sum is direct because $aR \in \ker\Tr$ implies $a = \Tr(aR) = 0$ so $aR = 0$. On the first summand $\WW$ acts as $0$, since $\PP\WW=\WW\PP=0$, and on the second summand $\WW$ acts as $\EE$. Recalling the primitivity of $\EE$, we therefore have $\sigma(\WW) = \{0\} \cup \left(\sigma(\EE) \setminus \{1\}\right)$, which gives the following for the spectral radius $r(\WW)$ by Gelfand's formula:
\[\|\WW^n\|^{1/n} \to r(\WW) < 1\,.\]
In particular, $\|\WW^n\| \to 0$, so that $\|\Delta_n\| \to 0$. We decompose
\begin{align*}
J_n=&\ds_V\otimes\big(R-\lmin(R)\ds_V\big)\\
&+\lmin(R)\left(\ds_V\otimes\ds_V+\frac{\Delta_n}{\lmin(R)}\right).
\end{align*}
The first term is a tensor product of positive semidefinite operators. Letting $L$ be large enough so that $\|\Delta_n\|/\lmin(R)\leq1/D$ for all $n\geq L$, the second term is separable by Theorem 1 of \cite{gurvitsbarnum2002} applied to the space $V\otimes V$.
\end{proof}

\begin{thm}\label{thm:MPS}
For a normal translation-invariant matrix product state, there exists $\ell\ge1$, depending only on the transfer operator $\EE$, such that $\rho_{AC}$ is separable when $d(A,C)\ge\ell$.
\end{thm}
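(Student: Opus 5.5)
The plan is to combine Proposition~\ref{prop:EB} with the factorization \eqref{eq:rhoAC_MPS} and the observation made just before Proposition~\ref{prop:EB}. Since $\EE$ is primitive (by normality and \cite[Prop.~3]{sanz2010wielandt}), Proposition~\ref{prop:EB} provides an $L\ge1$, depending only on $\EE$, such that $\EE^n$ is entanglement breaking for all $n\ge L$. We set $\ell=L+1$, or more precisely choose $\ell$ so that $d(A,C)\ge\ell$ forces $|B|\ge L$. With the lattice distance on $\zz$ for consecutive intervals $A=[a,b]$, $B=[b+1,c-1]$, $C=[c,e]$, we have $d(A,C)=c-b=|B|+1$, so $\ell=L+1$ suffices. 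Then $\ell$ depends only on $\EE$ and not on $A$ or $C$.

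Next, fix intervals $A$ and $C$ with $d(A,C)\ge\ell$, and let $B$ be the interval between them, so $|B|\ge L$. If $A$ and $C$ are not given in the consecutive form, translation invariance lets us assume $A$ lies to the left of $C$; the case with $C$ on the left is symmetric under swapping the roles of the two subsystems. Consider the state
\[
\sigma=\FF_C(R)\in\cB(V)\otimes\cB(\cH_C),
\]
which is positive semidefinite because $\FF_C$ is completely positive and $R>0$. Since $\EE^{|B|}$ is entanglement breaking, $(\EE^{|B|}\otimes\id_C)(\sigma)$ is separable across $V|C$. We write it as $\sum_k X_k\otimes Y_k$ with $X_k\ge0$ on $V$ and $Y_k\ge0$ on $\cH_C$; the auxiliary system in the definition of entanglement breaking is $\cH_C$, placed on the other side.

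Then apply $\Tr_V\circ(\FF_A\otimes\id_C)$. By \eqref{eq:rhoAC_MPS} this yields $\rho_{AC}=\sum_k \Tr_V[\FF_A(X_k)]\otimes Y_k$. Each $\Tr_V\FF_A(X_k)$ is positive semidefinite on $\cH_A$ because $\Tr_V\circ\FF_A$ is completely positive. Hence $\rho_{AC}$ is separable.

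The only point needing care is checking that the map acting on $V$ is a local operation $V\to\cH_A$ tensored with $\id_C$, so that it preserves separability. This is exactly what the factorization \eqref{eq:rhoAC_MPS} gives, so the main work is already contained in Proposition~\ref{prop:EB}. The remaining work is bookkeeping of the distance convention relating $d(A,C)$ to $|B|$.
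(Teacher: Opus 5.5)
Your proof is correct and follows the paper's argument essentially verbatim: you take $\ell=L+1$ from Proposition~\ref{prop:EB}, observe that $(\EE^{|B|}\otimes\id_C)(\FF_C(R))$ is separable, and then push it through the local channel $\Tr_V\circ\FF_A$ using \eqref{eq:rhoAC_MPS}. The one thing the paper adds is a one-line extension to non-interval $A$, $C$: enlarge them to intervals and trace out the extra sites, which preserves separability. Your remark about ordering needs no translation invariance, since swapping the roles of $A$ and $C$ is just a relabeling of the tensor factors.
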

\begin{proof}
Let $L$ be as in Proposition~\ref{prop:EB} and $\ell = L+1$. If $d(A,C) \geq \ell$ then $\abs{B}\geq L$ so that $\EE^{\abs{B}}$ is entanglement breaking. By normality, $R$ is a full-rank density matrix and $\FF_C$ is a channel, implying $\FF_C(R)$ is a state on $V\otimes\cH_C$. Thus
\[
\sigma=\big((\EE^{|B|}\otimes\id_C)\circ\FF_C\big)(R)
\]
is a separable state on $V\otimes\cH_C$.
By~\eqref{eq:rhoAC_MPS}, $$\rho_{AC}=\big((\Tr_V\circ\FF_A)\otimes\id_C\big)(\sigma)$$ where $\Tr_V\circ\FF_A:\cB(V)\to\cB(\cH_A)$ is a local channel acting only on the virtual bonds, which preserves separability. Thus $\rho_{AC}$ is separable between $A$ and $C$. The non-interval case follows by taking partial traces of subsystems, which do not introduce entanglement.
\end{proof}

As an example, in the appendix we calculate that the AKLT state has entanglement length $2$. We also include an example of a non-normal matrix product state with an entanglement-breaking transfer operator: the GHZ state, which has entanglement length $2$ but for which the transfer operator has degenerate largest eigenvalue.

We call entanglement length scaling which is independent of $A$ and $C$, as in Theorem~\ref{thm:MPS}, uniform entanglement sudden death. The following theorem shows that for pure states it is a characterizing feature of matrix product structure. $L$ is related to the entanglement length by $L = \ell - 1$.

\begin{thm}\label{thm:converse}
Let $\ket\Psi\in(\cc^d)^{\otimes N}$ be a pure state on the chain $[1,N]$ and $L\geq 1$. Suppose that for every decomposition $[1,N]=A\cup B\cup C$ into consecutive nonempty intervals with $|B|=L$, the marginal $\rho_{AC}$ of $\ket\Psi\bra\Psi$ is separable. Then $\ket\Psi$ has a matrix product state representation as in~\eqref{eq:mps_rep} with bond dimension $D\leq d^L$. Furthermore, if a family of pure states $(\Psi_N)_{N\in\nn}$ satisfies the above with $L$ independent of $N$, their bond dimensions are uniformly bounded.
\end{thm}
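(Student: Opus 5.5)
The plan is to bound the Schmidt rank of $\ket\Psi$ across every bipartition $[1,k]\,|\,[k+1,N]$ by $d^L$. Then a sequence of Schmidt decompositions (singular value decompositions), the standard Vidal/canonical-form construction, produces a representation~\eqref{eq:mps_rep} whose bond dimension on bond $k$ equals the Schmidt rank across that cut. This gives $D\le d^L$. Because the bound $d^L$ does not depend on $N$, the uniform statement for a family $(\Psi_N)$ follows at once.

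\textbf{Cuts near the right end.} For $k\ge N-L$ the right part $[k+1,N]$ has at most $L$ sites. The Schmidt rank is therefore at most $\dim\cH_{[k+1,N]}\le d^L$, with no hypothesis needed.

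\textbf{Remaining cuts.} For $1\le k\le N-L-1$ set $A=[1,k]$, $B=[k+1,k+L]$ and $C=[k+L+1,N]$, all nonempty. By hypothesis $\rho_{AC}$ is separable. The Schmidt rank across $A|BC$ is $\operatorname{rank}\rho_A$, so I need
\[
\operatorname{rank}\rho_A\le\operatorname{rank}\rho_{AC}=\operatorname{rank}\rho_B\le d^L .
\]
The equality holds because $\ket\Psi$ is a purification of $\rho_{AC}$ with purifying system $B$, so $\rho_{AC}$ and $\rho_B$ have the same nonzero spectrum. The final inequality holds since $\dim\cH_B=d^L$.

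\textbf{The first inequality.} The key input is $\operatorname{rank}\rho_A\le\operatorname{rank}\rho_{AC}$, valid for separable $\rho_{AC}$. This follows from the Nielsen--Kempe majorization theorem: for separable states the spectrum of $\rho_{AC}$ is majorized by that of $\rho_A$. Majorization $\lambda(\rho_{AC})\prec\lambda(\rho_A)$ implies that $\rho_{AC}$ has at least as many nonzero eigenvalues as $\rho_A$.

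I expect this step to be the main obstacle if one wants a self-contained argument. A direct approach would be to write $\rho_{AC}=\sum_i p_i\ket{a_i}\bra{a_i}\otimes\ket{c_i}\bra{c_i}$ and compare the span of $\{\ket{a_i}\otimes\ket{c_i}\}$ with the span of $\{\ket{a_i}\}$. This is not immediate from dimension counting, since many product vectors can lie in a low-dimensional space. I would therefore cite the majorization result rather than reprove it.

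With all Schmidt ranks bounded by $d^L$, the remaining construction is routine:
\begin{itemize}
\item Perform successive SVDs from left to right.
\item At step $k$, the left isometry has at most $\min(d^k,d^L)$ columns.
\item Absorb the final singular values into the last tensor.
\item Take $\bra l$ and $\ket r$ to be one-dimensional, padded to size $D$.
\end{itemize}
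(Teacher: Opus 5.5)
Your proposal is correct and follows the paper's proof essentially step for step: the same cuts $A=[1,k]$, $B=[k+1,k+L]$, $C=[k+L+1,N]$, the purity identity $\operatorname{rank}\rho_{AC}=\operatorname{rank}\rho_B\le d^L$, the trivial handling of cuts near the right end, and Vidal's construction. The only difference is the source for $\operatorname{rank}\rho_A\le\operatorname{rank}\rho_{AC}$: you derive it from Nielsen--Kempe majorization, while the paper cites the Horodecki et al.\ rank result; both are valid. You also overestimate the difficulty of this step, since it has a short direct proof. Write $\rho_{AC}=\sum_i p_i\ket{a_i}\bra{a_i}\otimes\ket{c_i}\bra{c_i}$ and take a subset $S$ for which $\{\ket{a_i}\}_{i\in S}$ is a basis of $\supp\rho_A$. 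Then $\{\ket{a_i}\otimes\ket{c_i}\}_{i\in S}$ is linearly independent: apply $\ds\otimes\bra{c_j}$ to a vanishing linear combination. These vectors lie in $\supp\rho_{AC}$, which gives the inequality.
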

\begin{proof}
By~\cite{vidal2003}, it suffices to bound the Schmidt rank of $\ket\Psi$ across each cut $[1,k]\,|\,[k+1,N]$ by $d^L$.
If $N-k\leq L$, the Schmidt rank is at most $\dim\cH_{[k+1,N]}\leq d^L$.
Otherwise, let $A=[1,k]$, $B=[k+1,k+L]$ and $C=[k+L+1,N]$, all nonempty.
Since $\ket\Psi$ is pure on $ABC$, the Schmidt decomposition across $B\,|\,AC$ gives $\operatorname{rank}(\rho_{AC})=\operatorname{rank}(\rho_B)\leq d^L$.
As $\rho_{AC}$ is separable by assumption, \cite{horodecki2003rank} shows that $\max\{\operatorname{rank}(\rho_A), \operatorname{rank}(\rho_C)\} \leq \operatorname{rank}(\rho_{AC})$. This implies that the Schmidt rank across $[1,k]\,|\,[k+1,N]$, which equals $\operatorname{rank}(\rho_A)$, is at most $\operatorname{rank}(\rho_{AC})\leq d^L$.
\end{proof}

As a consequence, uniform spatial sudden death on finite chains as in Theorem~\ref{thm:MPS} fails for every family of pure states, such as ground states, whose Schmidt rank across some cut grows with $N$. Therefore, generally the entanglement length in Theorem~\ref{thm:ground_state} must depend on the subregion size. We note the above proof does not require translation invariance, whereas the proof of Theorem~\ref{thm:MPS} does.

\section{Discussion and outlook}\label{sec:outlook}

We discuss the necessity of our hypotheses and some open questions.

\paragraph{The dependence on $\lmin(\rho_A)$.}
The bound on entanglement length from Theorem~\ref{thm:ground_state} depends on the smallest nonzero eigenvalues of the marginals, which can be poorly behaved even for simple models. For the ground state of the transverse-field chain $$H=-\sum_iZ_i+g\sum_iX_iX_{i+1}$$ the single-site marginal is pure at $g=0$ but is expected to have rank two for small $g>0$, and $\lmin(\rho_1)\to0$ as $g\to0$. In this case the bound in Theorem~\ref{thm:ground_state} diverges, although the state at $g=0$ is a product state with zero entanglement.
It is not clear whether the actual entanglement of this family of ground states is bounded as $g\to0$.

\paragraph{Uniformity and the thermodynamic limit.}
Unlike generic gapped ground states, Theorem~\ref{thm:MPS} shows there is no obstruction to defining a finite entanglement length $\ell$ for infinite half-chains in the case of normal matrix product states. Every marginal $\rho_{AC}$ of the left and right half-infinite chains $A$ and $C$ separated by an interval of length $\ell$ is separable, which is the natural notion of separability of the infinite system (see~\cite[Section 4]{scalet2026} for the operator-algebraic formulation).

\section*{Acknowledgements}
This work was supported in part by the National Science Foundation under grants DMS-2510824 (RF \& TJ) and PHY-2337931 (SOS).

\section*{AI Statement}
All content of this paper originated with the authors. Large language models were used as an aid in drafting and editing the exposition. All text has been reviewed and revised by the authors, who take full responsibility for the correctness of the statements and proofs.

\printbibliography

\appendix
\newpage
\section{Appendix}\label{app:rank}

We provide here the lemmas and their proofs.

\begin{lem}\label{lem:support}
Let $\rho_{AC}$ be a state on $\cH_A\otimes\cH_C$ with marginals $\rho_A,\rho_C$.
Then $\supp\rho_{AC}\subseteq\supp\rho_A\otimes\supp\rho_C$.
\end{lem}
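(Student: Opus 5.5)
The plan is to show that any vector in the support of $\rho_{AC}$ already lies in $\supp\rho_A\otimes\supp\rho_C$. Equivalently, writing $P_A$ and $P_C$ for the orthogonal projections onto $\supp\rho_A$ and $\supp\rho_C$, it suffices to prove
\[
\rho_{AC}=(P_A\otimes P_C)\,\rho_{AC}\,(P_A\otimes P_C).
\]
Indeed, the range of the right-hand side is contained in $\supp\rho_A\otimes\supp\rho_C$. Since $\rho_{AC}$ is self-adjoint, its support equals its range, which gives the claim.

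The key step is to show $(Q_A\otimes\ds)\rho_{AC}=0$, where $Q_A=\ds-P_A$ is the projection onto $\ker\rho_A$. Set $X=(Q_A\otimes\ds)\rho_{AC}^{1/2}$. Then
\[
\Tr(X^*X)=\Tr\big((Q_A\otimes\ds)\rho_{AC}\big)=\Tr(Q_A\rho_A)=0,
\]
where we use the definition of the partial trace and $Q_A\rho_A=0$. Hence $X=0$, and therefore $(Q_A\otimes\ds)\rho_{AC}=X\rho_{AC}^{1/2}=0$.

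Taking adjoints gives $\rho_{AC}(Q_A\otimes\ds)=0$ as well. Consequently $\rho_{AC}=(P_A\otimes\ds)\rho_{AC}(P_A\otimes\ds)$. The same argument with the roles of $A$ and $C$ exchanged gives $\rho_{AC}=(\ds\otimes P_C)\rho_{AC}(\ds\otimes P_C)$. Since $P_A\otimes\ds$ and $\ds\otimes P_C$ commute and multiply to $P_A\otimes P_C$, combining the two identities yields the displayed equality.

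There is no real obstacle here. The only point needing care is the positivity trick: a positive operator whose compression to a subspace has zero trace must annihilate that subspace. This is exactly what $\Tr(X^*X)=0$ captures.
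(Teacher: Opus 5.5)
Your proof is correct and follows essentially the same route as the paper: both use the positivity trick $\Tr\big((Q_A\otimes\ds)\rho_{AC}\big)=\Tr(Q_A\rho_A)=0$ to force $\rho_{AC}$ to annihilate $\ker\rho_A\otimes\cH_C$, then repeat the argument for $C$ and combine. The differences are cosmetic. You use the full kernel projection $Q_A$ instead of the paper's rank-one operators $\ket\psi\bra\psi\otimes\ds$, and you combine the two conditions through the commuting projections $P_A\otimes\ds$ and $\ds\otimes P_C$ rather than by intersecting orthogonal complements.
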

\begin{proof}
We show $\ker\rho_A\otimes\cH_C\subseteq\ker\rho_{AC}$.
Let $\ket\psi\in\ker\rho_A$ be nonzero and $P=\ket\psi\bra\psi\otimes\ds_C$.
Then
\begin{align*}
\big\|\rho_{AC}^{1/2}P\big\|_2^2&=\Tr\big(P\rho_{AC}P\big)=\braket{\psi}\,\Tr(P\rho_{AC})\\
&=\braket{\psi}\bra\psi\rho_A\ket\psi=0,
\end{align*}
so $\rho_{AC}^{1/2}P=0$ and hence $\rho_{AC}P=0$.
Since $\ket\psi\otimes\ket\phi=\tfrac{1}{\braket{\psi}}P(\ket\psi\otimes\ket\phi)$ for every $\ket\phi\in\cH_C$, this gives $\ket\psi\otimes\cH_C\subseteq\ker\rho_{AC}$.
Analogously $\cH_A\otimes\ker\rho_C\subseteq\ker\rho_{AC}$.
Taking orthogonal complements and writing $\cH_A=\supp\rho_A\oplus\ker\rho_A$, $\cH_C=\supp\rho_C\oplus\ker\rho_C$,
\begin{align*}
\supp\rho_{AC}&\subseteq(\ker\rho_A\otimes\cH_C)^\perp\cap(\cH_A\otimes\ker\rho_C)^\perp\\
&=(\supp\rho_A\otimes\cH_C)\cap(\cH_A\otimes\supp\rho_C)\\
&=\supp\rho_A\otimes\supp\rho_C,
\end{align*}
where the last equality follows by expanding both spaces in the orthogonal decomposition of $\cH_A\otimes\cH_C$ into the four blocks $(\supp\rho_A\text{ or }\ker\rho_A)\otimes(\supp\rho_C\text{ or }\ker\rho_C)$.
\end{proof}

\begin{lem}\label{lem:corr_decay}
Let $\{H_\Lambda\}$ have a nondegenerate gapped ground state with gap $\gamma$. The bound
\[
\|\rho_{AC}-\rho_A\otimes\rho_C\|_1\leq K_{A,C}\,e^{-\mu d(A,C)}
\]
holds when $d(A,C)\geq\mu^{-1}$, where $\mu$ depends only on $\gamma$, the range $R$ and strength $J$ of the interaction, while $K_{A,C}$ depends on $R, J, d_A, d_C,|\partial A|$ and $|\partial C|$.
\end{lem}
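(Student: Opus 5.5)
The plan is to derive Lemma~\ref{lem:corr_decay} from the standard exponential clustering theorem for gapped ground states (Hastings--Koma, Nachtergaele--Sims). That theorem gives, for observables $O_A$, $O_C$ supported on $A$ and $C$,
\[
\big|\langle\Omega_\Lambda|O_AO_C|\Omega_\Lambda\rangle-\langle O_A\rangle\langle O_C\rangle\big|\le c\,\|O_A\|\|O_C\|\min(|\partial A|,|\partial C|)\,e^{-\mu d(A,C)}.
\]
The bound is uniform in $\Lambda$, and $\mu$ depends only on $\gamma$, $R$, $J$ (for instance $\mu=\gamma/(2v+\gamma)$ in units of the Lieb--Robinson velocity $v$, times the Lieb--Robinson decay rate). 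Since $\rho_{AC}^\Lambda-\rho_A^\Lambda\otimes\rho_C^\Lambda$ is a finite-dimensional operator and the marginals converge, it is enough to prove the trace-norm bound uniformly in $\Lambda$ and then pass to the limit.

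To convert operator correlations into a trace-norm bound, I will use duality:
\[
\|\rho_{AC}-\rho_A\otimes\rho_C\|_1=\sup_{\|O\|\le1}|\Tr(O(\rho_{AC}-\rho_A\otimes\rho_C))|.
\]
The supremum runs over $O$ on $\cH_A\otimes\cH_C$, which is not of product form. I will therefore expand $O$ in a product basis: take matrix units, or a Hilbert--Schmidt orthonormal unitary (Weyl) basis $\{U_\alpha\otimes V_\beta\}$, and write
\[
O=\sum_{\alpha,\beta}c_{\alpha\beta}\,U_\alpha\otimes V_\beta,\qquad |c_{\alpha\beta}|\le\|O\|.
\]
There are $d_A^2d_C^2$ terms. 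Applying clustering to each term gives
\[
\|\rho_{AC}-\rho_A\otimes\rho_C\|_1\le c\,d_A^2d_C^2\min(|\partial A|,|\partial C|)e^{-\mu d(A,C)}.
\]
A Cauchy--Schwarz argument on the coefficients would give better constants, but these are not needed. This yields $K_{A,C}$ depending on $d_A$, $d_C$, $|\partial A|$, $|\partial C|$, and on $R$ and $J$ through $c$.

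The condition $d(A,C)\ge\mu^{-1}$ comes from the clustering theorem. Its bound typically carries a prefactor that depends on the distance, or holds only once the distance exceeds the interaction range. Restricting to $d\ge\mu^{-1}$ lets that prefactor be absorbed into $K$ by slightly decreasing $\mu$.

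The main obstacle is making sure the clustering constants are uniform in $\Lambda$ along the chosen sequence. Uniformity of the gap $\gamma$ and of the Lieb--Robinson bounds gives this. I also need that the ground state is unique, so that the spectral projection in the Hastings--Koma argument is rank one; nondegeneracy is assumed. Passing to the limit is then routine, since the trace norm is continuous on a fixed finite-dimensional space.
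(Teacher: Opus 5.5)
Your proposal is correct and follows essentially the same route as the paper: trace-norm/operator-norm duality, expansion of $\cO$ in a product basis of matrix units, a term-by-term application of the Nachtergaele--Sims clustering bound uniformly in $\Lambda$, and then passage to the limit. The only differences are minor. The paper uses Cauchy--Schwarz to obtain the factor $(d_Ad_C)^{3/2}$ instead of your $(d_Ad_C)^2$. Also, the distance-dependent prefactor in the clustering theorem is $1+(\mu d(A,C))^{-1/2}+c_{R,J}\min(|\partial A|,|\partial C|)$, so the hypothesis $\mu d(A,C)\ge 1$ bounds it directly and there is no need to decrease $\mu$.
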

\begin{proof}
Take $\Lambda$ large enough so that it contains $A \cup C$, and write $\langle \cO \rangle = \bra{\Omega_\Lambda} \cO \ket{\Omega_\Lambda}$ for $\cO$ supported on any $X \subset \Lambda$. By duality of the trace norm and operator norm,
\begin{align*}
&\|\rho_{AC}^\Lambda - \rho_A^\Lambda \otimes\rho_C^\Lambda\|_1\\
&\qquad=\sup_{\|\cO\|\leq1}\abs{\Tr\left(\rho_{AC}^\Lambda\cO\right)-\Tr\left(\left(\rho_A^\Lambda\otimes\rho_C^\Lambda\right)\cO\right)},
\end{align*}
with the supremum over operators $\cO$ on $\cH_A \otimes \cH_C$.
Let $\{T_i\}_{i=1}^{d_A^2}$ and $\{S_j\}_{j=1}^{d_C^2}$ be the matrix units of $\cB(\cH_A)$ and $\cB(\cH_C)$. These are orthonormal with respect to the Hilbert--Schmidt inner product and satisfy $\|T_i\|=\|S_j\|=1$. Expanding $\cO=\sum_{i,j}c_{ij}T_i\otimes S_j$,
\begin{align*}
&\|\rho_{AC}^\Lambda - \rho_A^\Lambda \otimes\rho_C^\Lambda\|_1\\
&\qquad\leq\sup_{\|\cO\|\leq1}\sum_{i,j}|c_{ij}|\,\abs{\langle T_i S_j\rangle - \langle T_i \rangle \langle S_j \rangle}.
\end{align*}
Set $\tilde S_j=S_j-\langle S_j\rangle\ds$, so that $\langle T_i\tilde S_j\rangle = \langle T_iS_j\rangle-\langle T_i\rangle\langle S_j\rangle$ and $\|\tilde S_j\|\leq2$. Since $\langle\tilde S_j\rangle=0$ and the ground state is nondegenerate, $\tilde S_j\ket{\Omega_\Lambda}$ lies in the excited subspace of $H_\Lambda$. The exponential clustering theorem~\cite{nachtergaelesims2009} gives that there exists a $\mu > 0$, depending only on $\gamma$, the range $R$ and strength $J$ of the interaction, such that
\begin{align*}
\abs{\langle T_i\tilde S_j\rangle}&\leq\|T_i\|\|\tilde S_j\|\Bigl(1+\frac{1}{\sqrt{\mu d(A,C)}}\\
&\qquad+c_{R,J}\min(|\partial A|,|\partial C|)\Bigr)e^{-\mu d(A,C)}\\
&\leq\frac{K_{A,C}}{(d_Ad_C)^{3/2}}e^{-\mu d(A,C)}\,,
\end{align*}
using $\mu d(A,C)\geq1$, where $c_{R,J}$ depends only on $R$ and $J$, and
\[
K_{A,C} \coloneq 2(d_Ad_C)^{3/2}\Big(2+c_{R,J}\min(|\partial A|,|\partial C|)\Big)\,.
\]
Finally, by Cauchy--Schwarz and orthonormality of the matrix units,
\begin{align*}
\sum_{i,j}|c_{ij}|&\leq d_Ad_C\Big(\sum_{i,j}|c_{ij}|^2\Big)^{1/2}\\
&=d_Ad_C\|\cO\|_2\leq(d_Ad_C)^{3/2}\|\cO\|\,.
\end{align*}
Combined we have the bound
\[\|\rho_{AC}^\Lambda - \rho_A^\Lambda \otimes\rho_C^\Lambda\|_1 \leq K_{A,C} e^{-\mu d(A, C)}\,.\]
Taking the limit $\Lambda \to \zz^\nu$ gives the result. 
\end{proof}

The AKLT state \cite{aklt, affleck1988valence} is the paradigmatic example of a normal matrix product state, and we illustrate a calculation of its entanglement length.\begin{example}
    The AKLT state has entanglement length $2$.
\end{example}
\begin{proof}
The AKLT tensors are \begin{align}
&M^{+}=\begin{bmatrix} 0 & \sqrt{\frac{2}{3}} \\ 0 & 0
\end{bmatrix},\quad M^{0}=\begin{bmatrix} \sqrt{\frac{1}{3}} & 0 \\ 0 & -\sqrt{\frac{1}{3}}
\end{bmatrix},\\& M^{-}=\begin{bmatrix} 0 & 0 \\ -\sqrt{\frac{2}{3}} & 0
\end{bmatrix}.\end{align}
These define the transfer operator: $$\EE(\ds)=\ds,\quad \EE(\sigma^i)=-\frac{1}{3}\sigma^i\,.$$
    From Theorem~\ref{thm:MPS} we evaluate the length $\ell=|B|+1$ at which the iterated transfer operator is entanglement breaking; letting $|B|=1$ we have \begin{equation}
        (\id_V \otimes \EE)(|\Omega_{V}\rangle\langle\Omega_V|)=\frac{2}{3}\ds_4-\frac{1}{3}|\Omega_V\rangle\langle\Omega_V|
    \end{equation}
    which is separable by the PPT criterion \cite{peres1996, horodecki1996}.
\end{proof}
\begin{example} The GHZ state has a non-normal, entanglement-breaking transfer operator and entanglement length $2$.
\end{example}
\begin{proof}
    The GHZ state with on-site Hilbert space $\cc^2$ can be described as a matrix product state with tensors
    \begin{equation}
        M^0=\begin{bmatrix}
            1 & 0 \\ 0 & 0
        \end{bmatrix},\quad M^1=\begin{bmatrix}
            0 & 0 \\ 0 & 1
        \end{bmatrix}
    \end{equation}
    with associated transfer operator \begin{equation}
        \EE(X)=|0\rangle\langle0|X|0\rangle\langle0|+|1\rangle\langle1|X|1\rangle\langle1|
    \end{equation}
    with degenerate eigenvalue $1$.
    Thus \begin{equation}
        (\id_V\otimes\EE)(|\Omega\rangle\langle\Omega|)=\begin{bmatrix}
            1 & 0 & 0 & 0 \\
            0 & 0 & 0 & 0 \\
            0 & 0 & 0 & 0 \\
            0 & 0 & 0 & 1 \\
        \end{bmatrix} \,,
    \end{equation}

    where $|\Omega\rangle$ is the maximally entangled vector. This last matrix is separable as it is a positive sum of product states, so the entanglement length of $\rho$ is bounded with $\ell=2$.
\end{proof}

\end{document}